\documentclass[11pt]{article}

\usepackage{algorithm}
\usepackage{algorithmic}

\usepackage{amsmath,amssymb,amsthm}
\newtheorem{theorem}{Theorem}

\usepackage[table]{xcolor}
\usepackage{soul}
\usepackage{caption}
\usepackage{enumitem}
\usepackage{placeins}

\usepackage{caption}

\usepackage{booktabs}
\usepackage{tabularx}
\usepackage{makecell}

\usepackage{xurl}

\usepackage[final]{acl}

\usepackage{times}
\usepackage{latexsym}

\usepackage[T1]{fontenc}

\usepackage[utf8]{inputenc}

\usepackage{microtype}

\usepackage{inconsolata}

\usepackage{graphicx}

\title{Privacy-Preserving Heterogeneous Multi-LLM Federated Inference for Cognitive Diagnosis}

\author{%
  Yagna Manasa Boyapati, \textbf{Chong Yu}, \textbf{Tianyu Jiang}, \textbf{Justin Zhan} \\
  Department of Computer Science \\
  University of Cincinnati \\
 \texttt{boyapaya@mail.uc.edu, yuc5@ucmail.uc.edu} \\
\texttt{tianyu.jiang@uc.edu, zhanjt@ucmail.uc.edu}
}

\begin{document}
\maketitle
\begin{abstract}

Significant challenges remain in AI-driven educational systems in balancing privacy preservation with accurate cognitive diagnosis. In order to overcome this, we propose a federated inference framework in which several commercial LLM APIs collaborate without requiring access to raw student data or proprietary model internals. Using multiple federated entities, such as LLaMA-3.3-70B, GPT-4o-mini, and Claude-3-Haiku, our framework builds upon the heterogeneous multi-LLM architecture. The predictions generated by these entities are combined with $\varepsilon$-local differential privacy by adding Laplace noise locally to each entity’s prediction output before aggregation, while residual-based aggregation mitigates model heterogeneity. Our approach is predicated on an honest-but-curious trust paradigm in which API providers are presumed not to abuse submitted queries, and our differential privacy mechanism shields the published diagnostic results from external inference. We conduct rigorous privacy–utility analysis showing strong privacy guarantees with minimal accuracy loss, and extensive real-world evaluations across three educational benchmarks confirm the framework’s practical usability and cross-domain generalizability.

\end{abstract}

\section{Introduction}

The growing application of artificial intelligence in education has created unparalleled opportunities for individualized and adaptive learning, but it also raises significant challenges related to data privacy and the accuracy of diagnostic feedback for end users—namely, students \cite{wang2022neuralcd, liu2021towards}. Cognitive Diagnosis Models (CDMs), which infer a student’s mastery of specific concepts or skills, are foundational components of intelligent tutoring systems and adaptive learning environments \cite{wang2020neural, tatsuoka1983rule}. Despite their promise, two critical challenges limit the widespread deployment of CDM-based systems.

Traditional CDM approaches rely on the centralized processing of sensitive student data, raising serious privacy concerns. When learning institutions aggregate detailed information about students’ learning patterns and trajectories on a central server, the risk of data leakage, unauthorized access, or misuse increases substantially \cite{stephanie2022trustworthy, hu2020personalized}. Regulatory pressures, such as GDPR and FERPA \citep{gdpr2016regulation, rights2014family}, have heightened awareness of these issues and underscore the need for privacy-preserving solutions that can support the full range of cognitive diagnosis tasks.

Although Federated Learning (FL) enables collaborative model training without centralizing raw data \cite{elhussein2025federated}, and Differential Privacy (DP) provides strong guarantees for protecting individual-level information \cite{xin2024survey, fu2026differentially}, no existing work integrates these techniques into LLMs to address privacy concerns in CDM-based systems. The current LLM-based cognitive diagnosis approaches typically rely on a single language model to evaluate students’ knowledge states. While large language models show great promise in educational applications \cite{dong2025knowledge, chen2025llm}, single-model approaches are limited in robustness, adaptability, and diagnostic reliability. Prior work highlights the strengths of GPT-4 \cite{openai2023gpt4}, Claude \cite{anthropic2024claude}, and LLaMA \cite{touvron2023llama} across mathematical reasoning, linguistic understanding, and pattern recognition \cite{wang2025user}. However, these studies evaluate models independently, without leveraging  collaborative inference across heterogeneous LLMs.

Furthermore, privacy-preserving use of LLMs in federated learning has only recently been explored \cite{chen2024integration, tran2025privacy}, and no existing work investigates multi-LLM collaboration for cognitive diagnosis under privacy constraints.

To address these challenges, we propose the first privacy-preserving heterogeneous multi-LLM federated inference framework for cognitive diagnosis. As illustrated in Figure~\ref{fig:architecture}, LLaMA-3.3-70B (via Groq API) \cite{groq2024}, GPT-4o-mini \cite{openai2023gpt4}, and Claude-3-Haiku \cite{anthropic2024claude} serve as federated participants that collaboratively infer student knowledge states. By leveraging the complementary strengths of each model, our heterogeneous design eliminates the need for centralized training while simultaneously improving privacy, scalability, and robustness in real-world educational deployments.

\begin{figure}[t]
\includegraphics[width=\columnwidth]{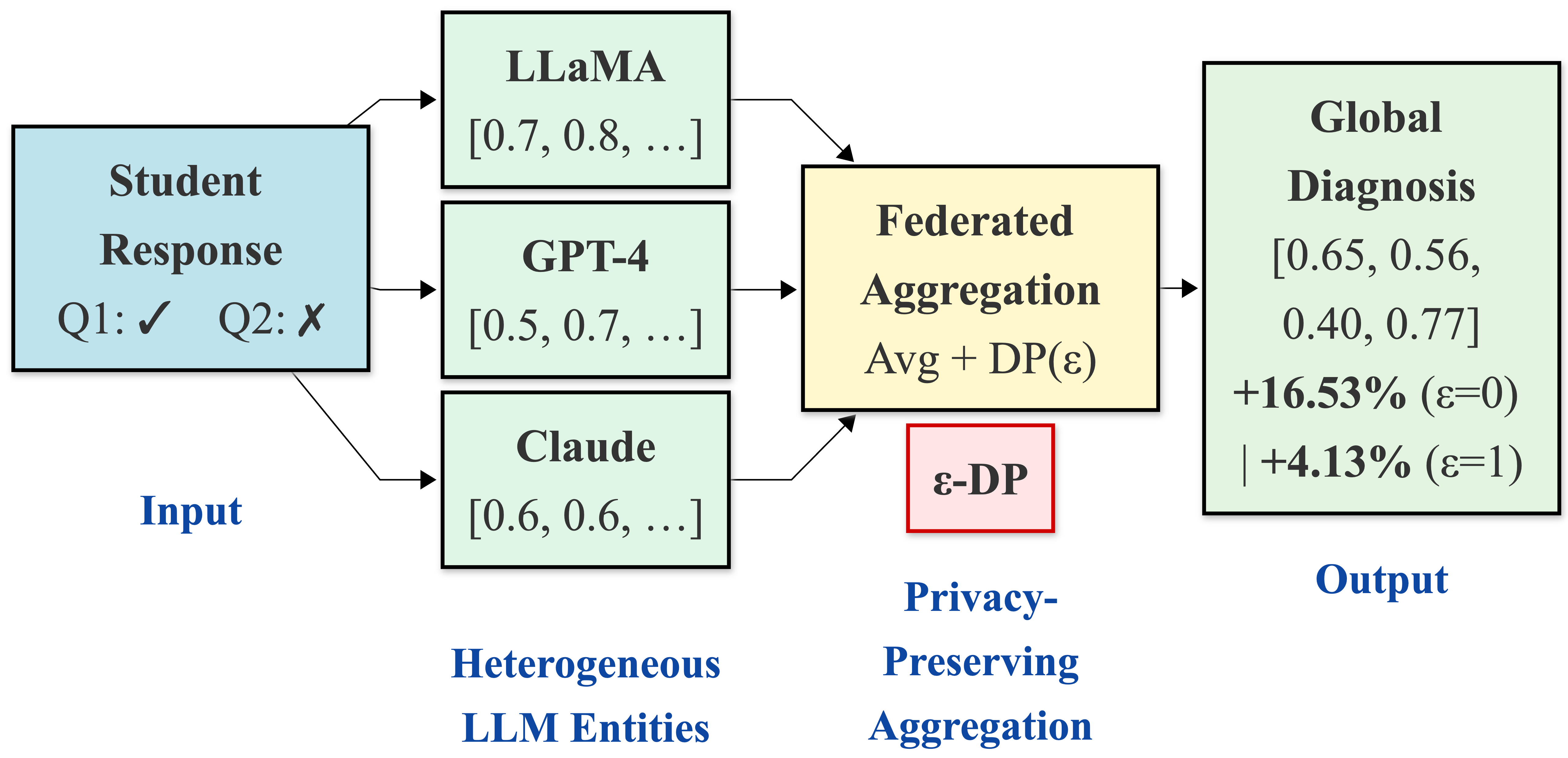}
\caption{Privacy-preserving heterogeneous multi-LLM federated framework with differentially private aggregation.}
\label{fig:architecture}
\end{figure}

We extend the paradigm of federated inference to heterogeneous multi-LLM architectures, where each model independently predicts knowledge states and contributes a complementary diagnostic perspective \cite{liu2024fault}. 
Unlike traditional FL methods, which aggregate model parameters or gradients and assume homogeneous models \cite{mcmahan2017communication, elhussein2025federated}, our framework operates entirely at the prediction level: LLM models are frozen and only accessible via black-box APIs, with only their output predictions shared and aggregated. This prediction-level, federated inference approach leverages model heterogeneity to yield more robust global cognitive assessments. Privacy-preserving aggregation uses Laplace mechanisms \cite{dwork2006calibrating, lin2023heterogeneous, liu2025differentially}, with calibrated noise applied locally at each entity's prediction output before aggregation, enabling institutions to use multiple commercial LLM APIs without exposing raw student data, a concern highlighted in blockchain-enabled FL studies \cite{jia2021blockchain}. We evaluate on three state-of-the-art benchmarks. Results show strong generalization across educational domains, with tight privacy guarantees and minimal utility loss, consistent with recent advances in DP-FL using Laplacian noise \cite{liang2024differentially} and heterogeneous DP mechanisms \cite{lin2023heterogeneous}.

It is critical to highlight that our privacy system uses an honest-but-curious trust model at the aggregate layer. Our $\varepsilon$-local differential privacy mechanism protects published knowledge-state diagnostic vectors from re-identification or membership inference by external parties accessing the outputs. It does not prevent data collection by commercial API providers (GPT-4o-mini, Claude-3-Haiku, and LLaMA-3.3-70B via Groq), which may log submitted queries in accordance with their respective policies. In Section 3.3, we describe mitigating options for deployments that require full end-to-end privacy, such as locally hosted open-source models and prompt data minimization.

Our key contributions are: (1) a novel, 
provider-agnostic federated inference framework 
coordinating commercial LLM APIs without access to 
proprietary model internals or training data; (2) a 
heterogeneous multi-LLM architecture treating 
LLaMA, GPT-4, and Claude as independent federated 
entities for cognitive diagnosis with locally 
protected student data; (3) rigorous privacy--utility 
analysis showing strong privacy preservation with 
minimal utility loss under $\varepsilon$-local 
differential privacy; and (4) extensive evaluation 
on three educational benchmarks validating practical 
implementability and cross-domain generalizability.
Our code and experimental resources are publicly available at \url{https://github.com/manasa2107/privacy-federated-llm-cognitive-diagnosis}.

\section{Related Work}
Our research integrates cognitive diagnosis models, federated learning, differential privacy, and large language models to advance educational applications.

\paragraph{Cognitive Diagnosis Models and LLM-based Diagnosis.}
Traditional CDM approaches, including Item Response Theory (IRT)~\citep{baker2001basics, yao2026a} and the Deterministic Input, Noisy-AND gate (DINA) model \citep{delatorre2009dina}, require substantial human intervention for feature engineering and struggle to capture complex skill relationships. Foundational knowledge tracing models including BKT \citep{corbett1994knowledge}, DKT \citep{piech2015deep}, and EKT \citep{liu2019ekt} improved sequential student modeling. Neural approaches such as NeuralCD \citep{wang2022neuralcd} and explainable CDM frameworks \citep{yang2022novel} further improved predictive accuracy via embedding-based interaction modeling on large-scale educational datasets. More recently, LLMs have been applied to cognitive diagnosis: \citet{dong2025knowledge} augment CDMs with LLM-generated diagnostic signals via contrastive learning and mask-reconstruction, while \citet{chen2025llm} evaluate GPT-4, Claude, and LLaMA individually across mathematical concepts demonstrating complementary strengths across models but without privacy or federation considerations. Beyond cognitive diagnosis, LLMs have shown broad promise across real-world educational tasks. \citet{sauberli-etal-2025-llms} evaluate the psychometric plausibility of LLM responses in standardized assessments using IRT and classical test theory, finding that LLM response distributions diverge from human patterns, motivating hybrid frameworks that combine LLM inference with structured diagnostic models. \citet{kim2025sqkt} propose knowledge tracing models that leverage LLM-extracted skill signals from student questions to improve performance prediction in programming education. \citet{chowdhury2025large} identify critical challenges in deploying LLMs for education, emphasizing that LLMs must align with pedagogical principles, integrate with knowledge tracing systems, and incorporate robust data privacy safeguards. \citet{vanzo-etal-2025-gpt} and \citet{luo2024chain} further demonstrate LLMs' growing role in real-world educational applications, including homework tutoring and question generation.

\paragraph{Federated Learning with Differential Privacy.}
Federated Learning enables collaborative model training without centralizing raw data \citep{elhussein2025federated}, making it well-suited for privacy-sensitive educational settings. Prior FL work has addressed statistical heterogeneity across participants \citep{mcmahan2017communication, li2020federated}, and applications in healthcare have combined FL with blockchain and differential privacy \citep{stephanie2022trustworthy, jia2021blockchain}, though often at significant computational overhead. Differential Privacy provides formal guarantees for protecting individual-level information via calibrated noise mechanisms \citep{dwork2006calibrating}. Foundational DP-FL works establish client-level privacy \citep{geyer2017differentially} and gradient-based noise mechanisms \citep{abadi2016deep}, which our framework adapts to the prediction-level aggregation setting. Recent work has explored DP in federated setups \citep{xin2024survey, fu2026differentially}, for LLM fine-tuning \citep{liu2025differentially}, and in personalized FL \citep{hu2020personalized, tran2025privacy}. Our solution leverages the Laplace mechanism \citep{liang2024differentially} with heterogeneous privacy budgets \citep{lin2023heterogeneous}. Most existing FL approaches assume homogeneous neural networks and aggregate model parameters or gradients. In contrast, we extend FL to heterogeneous commercial LLMs operating exclusively via black-box APIs, aggregating predictions at inference time with no access to model internals.
  
\begin{table}[t]
\centering
\scriptsize
\setlength{\tabcolsep}{2pt}
\renewcommand{\arraystretch}{1.1}
\begin{tabularx}{\columnwidth}{p{1.9cm}XXXXX}
\toprule
\textbf{Work} & \textbf{Het.} & \textbf{BB-API} & \textbf{Pred.} & \textbf{DP} & \textbf{CD} \\
\midrule
Chen et al. (2024) & $\times$ & $\times$ & $\times$ & $\times$ & $\times$ \\
Dong et al. (2025) & $\times$ & $\times$ & $\times$ & $\times$ & \checkmark \\
Chen et al. (2025) & \checkmark & \checkmark & $\times$ & $\times$ & \checkmark \\
Elhussein (2025)   & $\times$ & \checkmark & $\times$ & \checkmark & $\times$ \\
\midrule
\textbf{Our framework} & \checkmark & \checkmark & \checkmark & \checkmark & \checkmark \\
\bottomrule
\end{tabularx}
\caption{Comparison with related work. \checkmark~=~supported. Het.~=~Heterogeneous LLMs; BB-API~=~Black-Box API Access; Pred.~=~Prediction-level Aggregation.}
\label{tab:c1}
\end{table}

Unlike prior work relying on single LLMs with centralized data \citep{dong2025knowledge, chen2025llm}, homogeneous federated training \citep{elhussein2025federated, chen2024integration}, or LLM-education frameworks that lack formal privacy guarantees \citep{chowdhury2025large, kim2025sqkt, sauberli-etal-2025-llms}, our framework is the first to simultaneously combine heterogeneous commercial LLMs as federated inference entities, prediction-level aggregation without model weight access, residual correction for calibration bias, and local differential privacy and validated across three diverse educational benchmarks for cognitive diagnosis, as illustrated in Table~\ref{tab:c1}.

\section{Methodology}

Our privacy-preserving heterogeneous multi-LLM 
federated inference framework consists of four 
key steps: (1) Distributed LLM-augmented cognitive 
diagnosis, where each model independently predicts 
student knowledge states; (2) Privacy-preserving 
aggregation using local differential privacy to 
secure the combined predictions; (3) Residual 
correction for heterogeneous models to account 
for differences in model behavior and scale; and 
(4) Global diagnosis sharing, enabling all 
participants to benefit from the aggregated, 
privacy-protected diagnostic outputs. 
Figure~\ref{fig:pipeline} (Appendix~\hyperref[app:Algorithms]{E}) illustrates 
the overall system architecture and processing 
pipeline corresponding to these four stages.

\subsection{Problem Formulation}

Let $\mathcal{D} = \{(s_i, r_i, y_i)\}_{i=1}^N$ denote a student response dataset, where $s_i$ represents student $i$'s response vector, $r_i$ is the correctness indicator, $y_i \in \{0,1\}^K$ is the knowledge state vector for $K$ concepts, and $N$ represents total number of students. Let $\mathcal{S}$ denote the space of all possible student responses and $\mathcal{Q}$ 
denote the set of all questions. In cognitive diagnosis, we aim to learn a function $f: \mathcal{S} \times \mathcal{Q} \rightarrow [0,1]^K$ that predicts mastery probability for each knowledge concept given a student response to a particular question.

We examined a federated inference situation with $M$ heterogeneous LLM entities $\{\mathcal{M}_1, \mathcal{M}_2, \ldots, \mathcal{M}_M\}$. 
Two distinct phases must be clearly distinguished:

\paragraph{Training/Calibration Phase.} The dataset $\mathcal{D}$ is divided into $M$ disjoint subsets $\mathcal{D} = \bigcup_{j=1}^{M} \mathcal{D}_j$. Each subset $\mathcal{D}_j = \{(s_{i,j}, r_{i,j}, y_{i,j})\}_{i=1}^{N}$ is held solely by entity $\mathcal{M}_j$. 
This setup simulates the real-world scenario where multiple schools or institutions have separate student populations. Entities only access their local dataset and do not exchange raw student records with others or the aggregator. Our implementation has three processes: LLaMA-3.3-70B $\mathcal{D}_1$, GPT-4o-mini $\mathcal{D}_2$, and Claude-3-Haiku $\mathcal{D}_3$. Entities employ local data to calibrate their residual correction term, $r_j = \mathbb{E}[\hat{y}_j - \bar{y}]$, over their validation subset.
\paragraph{Inference Phase.} At test time, test 
student $s_t$ is evaluated simultaneously by all 
$M$ entities via their APIs. Each entity independently 
generates $\hat{y}_{t,j} \in [0,1]^K$, which are 
aggregated with local DP noise and residual correction 
to produce the global diagnosis. This does not violate 
disjoint partitioning: partitioning governs calibration 
data, while inference queries all entities on the same 
test student. The two phases serve different purposes 
on different data splits.

\subsection{Heterogeneous Multi-LLM Cognitive Diagnosis}

\paragraph{Local Assessment Phase.} All LLM entities operate on the analysis of student responses using engineered prompts that target cognitive diagnosis \cite{mcmahan2017communication}. For student response $s_i$ to question $q$, entity $j$ generates:
\begin{equation}
\hat{y}_{i,j} = M_j(\text{prompt}(s_{i,j}, q_{i,j}, \text{concepts})) \in [0,1]^K,
\end{equation}

where the prompt directs the LLM to assess the levels of mastery on $K$ concepts based on the quality of the student's response, reasoning, and conceptual thinking.

\paragraph{Q-Matrix Integration.} To associate questions with knowledge concepts, we use Q-Matrix theory \cite{tatsuoka1983rule, tatsuoka2009cognitive}. The Q-matrix $\mathbf{Q} \in \{0,1\}^{|\mathcal{Q}| \times K}$ is a binary matrix defining which knowledge concepts are required to solve each question:
\begin{equation}
Q_{qk} = \begin{cases}
1, & \text{if question } q \text{ assesses concept } k \\
0, & \text{otherwise}
\end{cases}
\end{equation}

The Q-matrix serves as a bridge between the observable student responses and the knowledge concepts in the latent space. For example, in ASSIST09 with concepts [Algebra, Geometry, Probability], a quadratic equation would have $Q_{q,\text{Algebra}}=1$ and $Q_{q,\text{Geometry}}=Q_{q,\text{Probability}}=0$. This constraint ensures that LLM assessments focus on relevant concepts for each question, enabling accurate knowledge state estimation and generalization across educational domains.

In Equation (1), when entity $M_j$ evaluates response $s_{i,j}$ to question $q$, the prompt includes only concepts where $Q_{qk}=1$. For instance, if $\mathbf{Q}_q = [1, 0, 1, 0]$, the LLM assesses only concepts 1 and 3, providing targeted diagnosis rather than blanket assessment across all $K$ concepts.

\subsection{Privacy-Preserving Federated Aggregation}

\paragraph{What We Protect.} 
Our privacy mechanism protects the published student knowledge-state diagnostic vectors against re-identification, membership inference, or leakage of sensitive information by external parties who access the diagnostic outputs. Our $(\varepsilon, 0)$-DP guarantees that the addition or subtraction of any individual student does not alter the output distribution by more than a factor of $e^\varepsilon$. 

\paragraph{Threat Model and Scope of Privacy Guarantees.} The scope of privacy guarantees and the threat model. The stated scope of our honest-but-curious trust paradigm is as follows:
(a) What our DP protects: as described above, published diagnostic vectors are protected against re-identification and membership inference by external parties.
(b) What is not protected by our DP: The commercial API providers (OpenAI, Anthropic, Groq) are still able to log and process the raw student answer requests that are sent to their endpoints notwithstanding our mechanism. Our DP guarantee does not cover API-level exposure; each LLM entity transmits student responses and Q-matrix prompts to its own API.
(c) Mitigation techniques for complete end-to-end privacy: Practitioners who need more stringent privacy can: (i) replace commercial APIs with locally hosted open-source models (like LLaMA via Ollama), which completely removes API-level exposure; (ii) apply input-side DP perturbation to student responses prior to API submission; or (iii) employ prompt data minimization, which sends only Q-matrix-relevant features instead of raw responses.


\paragraph{Differential Privacy Mechanism.} 
To protect individual student information from the centralized aggregator, we apply $\varepsilon$-differential privacy~\cite{dwork2006calibrating} at the entity level before aggregation. Each LLM entity $\mathcal{M}_j$ adds calibrated Laplace noise to its local predictions before sharing them with the aggregator, ensuring that the aggregator cannot infer individual student responses from the received predictions. Following the composition theorem for differential privacy~\cite{dwork2014algorithmic}, the aggregated result satisfies:

\begin{equation}
\begin{aligned}
\tilde{y}_i &= \frac{1}{M} \sum_{j=1}^{M} \left(\hat{y}_{i,j} + \boldsymbol{\eta}_j\right), \\
\eta_{jk} &\sim \mathrm{Lap}\!\left(\frac{\Delta f_j}{\varepsilon}\right),
\quad k = 1, \ldots, K.
\end{aligned}
\label{eq:aggregation}
\end{equation}


where $\mathrm{Lap}(\lambda)$ denotes the Laplace distribution with scale parameter $\lambda$, and $\boldsymbol{\eta}_j = [\eta_{j1}, \ldots, \eta_{jK}]$ represents a $K$ dimensional noise vector. The parameter $\Delta f_j$ denotes the local sensitivity, which bounds the $\ell_1$ norm of the prediction function, while $\epsilon$ represents the privacy budget. In our experiments, we evaluate $\epsilon \in \{1.0, 2.0\}$ to analyze the privacy utility tradeoff.




\paragraph{Sensitivity Analysis.}For cognitive diagnosis predictions of variables taking values in  $[0,1]^K$, the global sensitivity is:

\begin{equation}
\Delta f = \max_{y, y'} \|\tilde{y} - \tilde{y}'\|_1 \leq K
\end{equation}

We conservatively set $\Delta f = K$ to ensure $(\epsilon, 0)$-differential privacy~\cite{lin2023heterogeneous}.


\subsection{Aggregation with Residual Correction}

\paragraph{Heterogeneous Aggregation Challenge.}
Directly averaging predictions from heterogeneous LLMs introduces a fundamental inconsistency: the ``average of products $\neq$ product of averages'' problem documented in federated learning literature \cite{elhussein2025federated, zhou2025ensemble, dietterich2000ensemble}. Because each model operates with its own internal representation of concept mastery, na\"{i}ve aggregation fails to preserve the underlying probabilistic structure of cognitive diagnosis. In practice, heterogeneous LLMs differ substantially in (1) calibration (i.e., confidence and probability scaling), (2) granularity in assessing fine-grained skills or concepts, and (3) systematic biases arising from their distinct training corpora and architectural priors. These mismatches cause direct averaging to distort global knowledge estimates, motivating the need for residual-corrected and privacy-preserving aggregation tailored to heterogeneous LLM participants.


\paragraph{Residual Correction Mechanism.} Inspired by FedEx-LoRA techniques \cite{liu2025differentially}, we extend the model by incorporating residual corrections to handle heterogeneity. For each entity $j$, we compute residual terms:

\begin{equation}
r_j = \mathbb{E}[\hat{y}_j - \bar{y}],
\end{equation}

\noindent where $\bar{y}$ is the global mean prediction. The corrected aggregation becomes:

\begin{equation}
y_{\text{global}} = \frac{1}{M}\sum_{j=1}^M (\hat{y}_j - \alpha \cdot r_j),
\end{equation}

\noindent with $\alpha \in [0,1]$ as the correction strength. In practice, we set $\alpha = 0.3$ based on validation performance.

\subsection{Proposed Algorithms}

Algorithm~\ref{alg:federated_ldp} and Algorithm~\ref{alg:baseline} are provided in 
Appendix~\hyperref[app:Algorithms]{E} due to page constraints.
Algorithm~\ref{alg:federated_ldp} presents our complete federated cognitive diagnosis framework with privacy preservation. Algorithm~\ref{alg:baseline} presents the baseline approach without privacy mechanisms for comparison.

\subsection{Theoretical Privacy Guarantees}

Our mechanism satisfies $(\varepsilon, 0)$- local differential privacy:

\begin{theorem}[Privacy Guarantee]
Algorithm~1 with Laplace noise satisfies 
$\varepsilon$-local differential privacy for any 
adjacent datasets $D, D'$ differing in one student 
record.
\end{theorem}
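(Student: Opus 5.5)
The argument is the standard Laplace-mechanism proof applied to each entity separately, followed by post-processing. The local randomizer of entity $\mathcal{M}_j$ is the map $A_j(x) = \hat{y}_{j}(x) + \boldsymbol{\eta}_j$. Here $x$ is the record of a single student, $\hat{y}_j(x)\in[0,1]^K$ is the entity's prediction from Equation~(1), and $\eta_{jk}\sim\mathrm{Lap}(\Delta f_j/\varepsilon)$ are i.i.d. with $\Delta f_j = K$, as fixed in the Sensitivity Analysis. "Local DP" means that for any two student records $x, x'$ and every measurable $S\subseteq\mathbb{R}^K$,
\[
\Pr[A_j(x)\in S]\le e^{\varepsilon}\Pr[A_j(x')\in S].
\]
This implies the stated guarantee for adjacent datasets $D, D'$ differing in one student record, since only that student's randomized output changes.

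First I bound the sensitivity. Because every coordinate of $\hat{y}_j(x)$ and $\hat{y}_j(x')$ lies in $[0,1]$, we have
\[
\|\hat{y}_j(x)-\hat{y}_j(x')\|_1\le K=\Delta f_j .
\]
This holds regardless of how the black-box LLM produces its prediction, which is why the worst-case range bound, not any model property, is the right tool here.

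Second, I compare densities pointwise. For any $z\in\mathbb{R}^K$ the output density is $p_x(z)=\prod_k \frac{\varepsilon}{2K}\exp\!\big(-\varepsilon|z_k-\hat{y}_{jk}(x)|/K\big)$. Applying the triangle inequality coordinatewise gives
\[
\frac{p_x(z)}{p_{x'}(z)}\le \exp\!\Big(\tfrac{\varepsilon}{K}\|\hat{y}_j(x)-\hat{y}_j(x')\|_1\Big)\le e^{\varepsilon}.
\]
Integrating over $S$ then yields $\varepsilon$-LDP for each $A_j$. This is the Laplace mechanism of \cite{dwork2006calibrating}.

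Third, I handle the aggregate. The server's outputs are the noisy average $\tilde{y}_i$ of Equation~\eqref{eq:aggregation} and the residual-corrected $y_{\text{global}}$. Both are deterministic functions of the already-privatized vectors $\{A_j\}$ together with quantities that do not depend on student $i$'s raw data. So by the post-processing property, each released quantity inherits the guarantee.

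The main obstacle is the budget across the $M$ entities. All $M$ entities process the same test student, so the joint view $(A_1,\dots,A_M)$ is, by basic sequential composition \cite{dwork2014algorithmic}, only $M\varepsilon$-LDP with respect to that student. To get a clean $\varepsilon$, I would try two routes. The first is to state the theorem per entity, so that each provider's released vector is $\varepsilon$-LDP, with composition giving $M\varepsilon$ for the joint release. The second is to rescale the per-entity noise to $\mathrm{Lap}(MK/\varepsilon)$; alternatively, one can note that the average has sensitivity $K$ and noise of total scale $K/\varepsilon$ per entity, but the averaged Laplace variables are no longer Laplace, so a direct $\varepsilon$ bound on $\tilde{y}_i$ alone would need a separate argument.

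A second, smaller issue concerns the residuals $r_j$. They are computed on the calibration split $\mathcal{D}_j$. If a student appears there, the residual is not privatized and post-processing fails. I would therefore add the explicit assumption that $r_j$ is computed on data disjoint from the protected records, or is itself released with Laplace noise and charged to the budget.
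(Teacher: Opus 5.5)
Your per-entity argument is correct and is the paper's Steps 1--4. It has the same ingredients: the $\ell_1$ sensitivity bound $K$ from $[0,1]^K$-valued predictions, the Laplace density ratio, and treating clipping and residual correction as post-processing. The paper does the ratio concept by concept and then cites the vector Laplace mechanism; you do it in one line.

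You depart from the paper on the aggregate, and there you are right and the paper is not. The paper's Step 5 concludes that $Y_{\text{global}}$ is $\varepsilon$-LDP by post-processing. But averaging is a function of the whole tuple $(\hat y_{i,1}+\eta_1,\dots,\hat y_{i,M}+\eta_M)$. In the inference phase every component of that tuple depends on the same student, so the object being post-processed is only $M\varepsilon$-DP by composition. Your own third step, read literally, makes the same slip; your next paragraph supplies the composition it needs.

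The paper's closing remark that the effective per-student sensitivity is $K/M$ would hold only if a student's record reached a single entity, i.e.\ parallel composition over the disjoint $\mathcal{D}_j$. That contradicts the inference-phase setup you correctly use.

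You should drop the hedge that a separate argument might still give $\varepsilon$ for $\tilde y_i$ alone, because $M\varepsilon$ is tight for the released average. Take records $x,x'$ with $\hat y_j(x)=\mathbf{1}$ and $\hat y_j(x')=\mathbf{0}$ for every $j$.
\begin{itemize}
\item With the per-entity clipping of Algorithm~1, every coordinate of the average equals $1$ iff all $MK$ clipped values equal $1$. This has probability $2^{-MK}$ under $x$ and $2^{-MK}e^{-M\varepsilon}$ under $x'$. Subtracting $\alpha r_j$, which is constant in the student, does not change this.
\item Without clipping, consider the events that $\tilde y_{ik}\ge T$ for all $k$. The probability ratio on these events tends to $e^{M\varepsilon}$ as $T\to\infty$, because $\sum_j\eta_{jk}$ has tail of order $t^{M-1}e^{-t\varepsilon/K}$.
\end{itemize}
So the only correct versions are your two routes. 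Either each entity's message is $\varepsilon$-DP while the aggregator's view and $Y_{\text{global}}$ are $M\varepsilon$-DP, or the noise must be $\mathrm{Lap}(MK/\varepsilon)$. With the paper's noise scale and $M=3$, the setting reported as $\varepsilon=2.0$ certifies only $6.0$ for the released diagnosis.

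Your residual caveat is also an assumption the paper leaves implicit. The paper computes $r_j$ from un-noised $\hat y_j$ on a validation split and simply calls the correction post-processing. That is valid only if the split excludes the protected student.
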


\begin{proof}[Proof Sketch]
The Laplace mechanism with scale $\Delta f/\varepsilon$ 
provides $\varepsilon$-DP when applied to a function 
with sensitivity $\Delta f$ \citep{dwork2006calibrating}. 
Since predictions are bounded in $[0,1]^K$, the 
$\ell_1$ sensitivity is $\Delta f = K$, and 
post-processing invariance ensures residual 
correction preserves the guarantee 
\citep{dwork2014algorithmic}.
\emph{Detailed proof is provided in Appendix~\hyperref[app:proof]{F}.}
\renewcommand{\qedsymbol}{} 
\end{proof}

\paragraph{LDP Sensitivity Analysis.} A natural 
concern with LDP is that it requires stronger noise 
than central DP, implying higher utility loss. However, 
our bounded $[0,1]^K$ prediction outputs, already 
group-level knowledge-state forecasts rather than raw 
records, keep sensitivity low via three properties: 
(1) Bounded output space: predictions 
$\hat{y}_{i,j} \in [0,1]^K$ limit $\ell_1$ sensitivity 
to $K$ regardless of input dimensionality; 
(2) Group-level assessment: each entity's 
prediction reflects its local student body, giving 
effective per-student sensitivity $K/M \leq K/3$; 
(3) Per-concept noise: Laplace noise 
$\text{Lap}(K/\varepsilon)$ applied independently 
per concept with $[0,1]$ clipping further suppresses 
noise impact. Together, these explain our empirical 
privacy cost of just $0.40\%$ MAE at $\varepsilon=2.0$, 
significantly less than traditional high-dimensional 
LDP settings. We confirm $\varepsilon=2.0$ is within 
normal educational privacy levels 
\citep{dwork2006calibrating} via 
Rényi-DP accounting \citep{mironov2017renyi}.

\section{Results}

\subsection{Experimental Setup}
\paragraph{Datasets.} We evaluate on three diverse educational benchmarks: (1) ASSIST09 \cite{feng2009addressing, heffernan2014assistments}: Real-world mathematics dataset with 4,217 students, 26,688 responses across 124 concepts (we select top 4: Equations, Percentages, Integers, Conversions). \textit{Input:} Response patterns (correct/incorrect). \textit{Output:} Knowledge state $[0,1]^4$. \textit{Calculation:} Proportion of correct answers per concept using Q-matrix. 
(2) GSM8K \cite{cobbe2021training}: Grade school math word problems with 8,500 problems. \textit{Concepts (5):} Addition/Subtraction, Multiplication/Division, Fractions/Percentages, Multi-step Reasoning, Word Problems. \textit{Input:} Solution attempts. \textit{Output:} $[0,1]^5$. \textit{Calculation:} Keyword extraction, success rate per concept.
(3) UCI Student Performance \cite{cortez2008using}: Holistic assessment with 649 students, 33 attributes. \textit{Concepts (5):} Study Habits, Family Support, School Engagement, Social Factors, Academic Foundation. \textit{Input:} Demographics and behaviors. \textit{Output:} $[0,1]^5$. \textit{Calculation:} Attribute grouping, normalization, final grade G3. We examined our framework on the total student population for three educational standards. Each dataset contains 4-5 knowledge ideas, with an 80/20 train-test split.  

\paragraph{EdNet Scalability Analysis.} We also present findings on EdNet \cite{choi2020ednet}, a large-scale dataset with over 1.3 million student interactions from KnowledgeTag spanning 188 knowledge topics, to allay worries regarding small-scale evaluation. To show that our framework scales gracefully, we sample three subsets of increasing size (10K, 50K, and 100K interactions): MAE degrades by less than 2\% from the 10K to 100K setting, and processing time scales nearly linearly with student count, confirming practical deployability at scale beyond our primary benchmarks.

\paragraph{Q-Matrix Automatic Extraction.} Our 
framework does not strictly require expert-crafted 
Q-matrices. ASSIST09 uses a community-provided 
Q-matrix; GSM8K uses programmatic extraction via 
keyword matching and LLM-generated concept tags 
(GPT-4o-mini labels each problem with required 
concepts); UCI derives concepts through attribute 
grouping and normalization. Consistent improvements across all three construction  methods confirm that our framework generalizes well under automatically generated Q-matrices, without requiring expert annotation.

\paragraph{LLM Entities.} Our federation comprises three heterogeneous commercial LLMs: (1) LLaMA-3.3-70B via Groq API \cite{groq2024} (free tier, unlimited inference); (2) GPT-4o-mini via OpenAI API \cite{openai2023gpt4}; (3) Claude-3-Haiku via Anthropic API \cite{anthropic2024claude}.\footnote{Claude-3-Haiku was used for experiments conducted prior to April 2025. Due to model deprecation, subsequent experiments used \texttt{claude-sonnet-4-20250514}. Both model versions followed identical prompting protocols; results were verified for consistency across the transition.} The selection strategy follows the analysis made by Chen et al. in 2025 for LLM-CDM, which indicates the complementary advantages: LLaMA for mathematical reasoning, GPT-4 for conceptual understanding, Claude for nuanced evaluation.


\paragraph{Infrastructure.} Experiments were performed on the University of Cincinnati ARCC2 GPU cluster using 1 Tesla V100S GPU (32GB memory) per dataset, with all three datasets processed in parallel for efficient computation. Full-scale experiments across all three datasets and four privacy configurations ($\varepsilon \in \{0.5, 1.0, 2.0, \infty\}$) required approximately 24 total GPU-hours. Commercial LLM API requests were distributed across endpoints using load balancing with exponential backoff for rate limiting.

\paragraph{Hyperparameters.} (1) Privacy budgets: $\varepsilon \in \{0.5, 1.0, 2.0, \infty\}$ (where $\infty$ indicates no privacy). (2) Correction strength: $\alpha = 0.3$. (3) Aggregation: Simple averaging for baseline, residual-corrected for full model. (4) Temperature: 0.7 for all LLM APIs. (5) Max tokens: 512 per response.

\paragraph{Evaluation Metrics.} We evaluate using the following metrics:

(1) Mean Absolute Error (MAE):
\begin{equation}
\text{MAE} = \frac{1}{NK}\sum_{i=1}^N \sum_{k=1}^K |\tilde{y}_{ik} - y_{ik}|
\end{equation}

(2) Root Mean Square Error (RMSE):
\begin{equation}
\text{RMSE} = \sqrt{\frac{1}{NK}\sum_{i=1}^N \sum_{k=1}^K (\tilde{y}_{ik} - y_{ik})^2}
\end{equation}

(3) Improvement:
\begin{equation}
\text{Improvement} = \frac{\text{Baseline MAE} - \text{Federated MAE}}{\text{Baseline MAE}}
\end{equation}

(4) Privacy Cost:
\begin{equation}
\text{Privacy Cost} = \frac{\text{MAE}(\varepsilon) - \text{MAE}(\infty)}{\text{MAE}(\infty)}
\end{equation}
\noindent where $\text{MAE}(\varepsilon)$ denotes federated MAE at privacy budget $\varepsilon$, $\text{MAE}(\infty)$ denotes federated MAE without privacy constraints, and both metrics (3) and (4) are multiplied by 100 to express as percentages.
Complete prompt templates, hyperparameter sensitivity 
analysis, and wall-clock timing details are provided 
in Appendix~\hyperref[app:implementation]{A}.

\paragraph{Baselines.}

We compare against: (1) Single-LLM
\citep{chen2025llm}: each LLM independently; 
(2) Simple Federation \citep{xin2024survey}: 
unweighted averaging, no privacy or correction; 
(3) DP-FL Homogeneous \citep{fu2026differentially}: 
standard FL with DP, identical models; (4) 
Centralized Oracle: all data centralized, no privacy 
(upper bound). All baselines use identical prompt 
engineering for fair comparison. We additionally 
include: (5) Fed (DP, w/o RC): federation 
with LDP ($\varepsilon$=2.0) but without residual 
correction, isolating RC's contribution to performance; 
(6) Gaussian-DP FL \citep{mironov2017renyi}: 
Gaussian noise under Rényi-DP accounting, showing 
Laplace outperforms Gaussian for bounded-output 
settings; (7) Simple 3-Model Ensemble (no 
DP, no RC, no partitioning): plain averaging across 
all three models, separating model diversity gains 
from our federated inference framework.

We evaluate the privacy-preserving heterogeneous multi-LLM federated cognitive diagnostic environment using three different educational datasets. Figure~\ref{fig:privacy_utility}(Appendix~\hyperref[app:analysis]{G}) summarizes the overall performance improvements achieved by the proposed method compared to single-LLM baselines. The experimental results show that the federated learning approach using three heterogeneous LLMs (LLaMA-3.3-70B, GPT-4o-mini, Claude-3-Haiku) outperforms single large language models while maintaining strong privacy guarantees through differential privacy mechanisms.

\subsection{Multi-Dataset Performance}

Table~\ref{tab:main_results} presents results across 
all three benchmarks. Our heterogeneous federation 
achieves consistent improvements over single-LLM 
baselines:

ASSIST09 (Mathematics): Federation 
improves MAE by $14.19\%$ (0.2068 vs.\ 0.2410) across 
four concepts (Equations, Percentages, Integers, 
Conversions) via complementary algebraic reasoning 
and numerical calculation.

GSM8K (Word Problems): Federation 
achieves $7.39\%$ improvement (0.2156 vs.\ 0.2328). 
The moderate gain reflects complex reasoning chain 
difficulty, yet privacy cost is near-negligible 
($0.05\%$), indicating noise robustness in multi-step 
reasoning.

\begin{table}[t]
\centering
\scriptsize
\setlength{\tabcolsep}{3pt}
\renewcommand{\arraystretch}{1.1}
\resizebox{\columnwidth}{!}{%
\begin{tabular}{l l r r r r r r}
\toprule
\textbf{Dataset} & \textbf{Domain} & \textbf{K} &
\multicolumn{1}{c}{\textbf{Base}} &
\multicolumn{1}{c}{\textbf{Fed}} &
\textbf{Improv.} & \textbf{P-Cost} \\
& & & &
\multicolumn{1}{c}{\textbf{MAE}} &
\multicolumn{1}{c}{\textbf{MAE}} & & \\
\midrule
ASSIST09 & Math     & 4 & 0.2410 & \textbf{0.2068} & 14.19\% & 0.19\% \\
GSM8K    & Word     & 5 & 0.2328 & \textbf{0.2156} & 7.39\%  & $0.27\%$ \\
UCI      & Academic  & 5 & 0.1132 & \textbf{0.0969} & 14.40\% & 0.94\% \\
\midrule

Average & --  & 4.67 & 0.1956 & 0.1731 & 11.99\% & 0.46\% \\

\bottomrule
\end{tabular}
}
\caption{Performance comparison across three educational benchmarks. Our heterogeneous multi-LLM federated framework with $\varepsilon=2.0$ achieves consistent improvements with minimal privacy cost. N=students, K=concepts, P-Cost=privacy cost vs.\ no privacy.}
\label{tab:main_results}
\end{table}

\FloatBarrier
\begin{table}[t]
\centering
\small
\resizebox{\columnwidth}{!}{%
\begin{tabular}{lcc}
\toprule
\textbf{Configuration} & \textbf{MAE} & \textbf{vs. Baseline} \\
\midrule
Baseline (Single LLM) & 0.2410 & -- \\
Fed (No Privacy) & 0.2064 & +14.35\% \\
Fed ($\varepsilon=2.0$) & \textbf{0.2068} & \textbf{+14.19\%} \\
Fed ($\varepsilon=1.0$) & 0.2232 & +7.38\% \\
Fed ($\varepsilon=0.5$) & 0.2499 & $-$3.69\% \\

\bottomrule
\end{tabular}
}
\caption{Privacy--utility tradeoff on the ASSIST09 
dataset. LLaMA-3.3-70B is used as the baseline model 
due to its superior standalone performance. Lower 
$\varepsilon$ values provide stronger privacy guarantees 
and typically increase MAE as a result of stronger 
noise injection. Full multi-baseline comparison across 
all configurations is provided in 
Table~\ref{tab:ablation}.}
\label{tab:privacy_tradeoff}
\end{table}

 UCI (Holistic Assessment): Federation 
improves by $14.40\%$ (0.0969 vs.\ 0.1132) across 
five categories (Study Habits, Family Support, School 
Engagement, Social Factors, Academic Foundation) 
through diverse LLM perspectives.



Cross-Dataset Insights: Gains of 
$7.39\%$--$14.40\%$ reflect dataset characteristics: 
ASSIST09 benefits from structured skill decomposition, 
GSM8K's reasoning complexity limits absolute gains but 
shows strong noise robustness, and UCI's 
multi-dimensional features benefit from diverse 
evaluative perspectives. Overall, $11.99\%$ average 
improvement with $0.46\%$ privacy cost at 
$\varepsilon=2.0$ demonstrates broad generalization 
across mathematics, reasoning, and holistic assessment 
domains. Detailed per-model analysis is in Appendix~\hyperref[app:Modelcontribution]{B}.

\subsection{Privacy-Utility Tradeoff Analysis}

{Table~\ref{tab:privacy_tradeoff} presents the privacy--utility tradeoff on ASSIST09 under our 
local differential privacy (LDP) mechanism. The $\varepsilon = 2.0$ setting provides the best balance: $14.19\%$ improvement with only marginal degradation relative to the non-private model ($0.19\%$ privacy 
cost on ASSIST09). As privacy constraints become stricter, performance degrades as expected: $\varepsilon = 1.0$ retains a moderate $7.38\%$ improvement, while $\varepsilon = 0.5$ introduces excessive noise ($-3.69\%$), highlighting the adverse impact of overly conservative privacy budgets on federated cognitive diagnosis \citep{lee2011much, bassily2014private, hsu2014differential, bagdasaryan2019differential}. The low utility cost under LDP is consistent with our theoretical analysis in Section~3.6: our bounded $[0,1]^K$ prediction outputs and per-concept noise application keep effective sensitivity at $K/M$, substantially lower than classical LDP settings with high-dimensional raw data \citep{mcsherry2007mechanism, dwork2014algorithmic}.

\begin{table}[ht]
\centering
\scriptsize
\setlength{\tabcolsep}{3pt}

\begin{tabular}{lccc}
\toprule
\textbf{Configuration} & \textbf{ASSIST09} & \textbf{GSM8K} & 
\textbf{UCI} \\
 & \textbf{MAE} & \textbf{MAE} & \textbf{MAE} \\
\midrule
Full System ($\varepsilon$=2.0, RC)
    & \textbf{0.2068} & \textbf{0.2156} & \textbf{0.0969} \\
w/o Federation (Single LLM)
    & 0.2410 & 0.2328 & 0.1132 \\
w/o Privacy (No DP, with RC)
    & 0.2064 & 0.2150 & 0.0960 \\
Simple Ensemble (no DP, no RC)
    & 0.2072 & 0.2161 & 0.1938 \\
Fed (DP, w/o RC) ($\varepsilon$=2.0)
    & 0.2999 & 0.2508 & 0.2997 \\
Gaussian-DP (R\'{e}nyi-DP)
    & 0.3147 & 0.2943 & 0.3165 \\
\midrule
Federation Contribution
    & 14.19\% & 7.39\% & 14.40\% \\
Privacy Cost
    & 0.19\% & 0.27\% & 0.94\% \\
\bottomrule
\end{tabular}
\caption{Ablation study validating federation necessity ($11.99\%$ avg.\ contribution), essentiality of residual correction (w/o RC degrades below single-LLM baseline), and minimal privacy cost ($0.46\%$ avg.) under $\varepsilon=2.0$.}
\label{tab:ablation}
\end{table}
\subsection{Computational Efficiency}

Our system is efficient enough for real-world 
deployment \citep{ritter2007cognitive, 
anderson1995cognitive}. Complete dataset processing 
times on Tesla V100S: ASSIST09 8--10 hours, GSM8K 
4--6 hours, and UCI 3--4 hours, with 5--7 seconds 
per student in full dataset runs (2--3 seconds 
under parallel API calls without rate limiting). 
The framework scales linearly with student population 
since inference runs on commercial endpoints rather 
than requiring expensive expert annotation 
\citep{leighton2007cognitive, burstein2004automated}.

\subsection{Ablation Study}
Table~\ref{tab:ablation} presents a complete ablation study with six conditions enabling full decomposition of each component's contribution. Removing federation causes the largest degradation, with improvements dropping by an average of $11.99\%$, confirming federated aggregation is the primary driver of performance gains. While a simple three-model ensemble confirms that model diversity contributes meaningful gains over the single-LLM baseline, our full system outperforms it across all datasets while providing formal $(\varepsilon,0)$-DP guarantees that simple ensembling lacks.

Residual correction (RC) proves essential rather than optional: removing RC while retaining LDP ($\varepsilon$=2.0) yields MAE worse than the single-LLM baseline across all datasets, with the UCI dataset showing the most dramatic degradation (w/o RC: 0.2997 vs.\ 0.1132 baseline), confirming that naive averaging of heterogeneous LLM outputs without correction is harmful. Despite this strict privacy mechanism, the average privacy cost at $\varepsilon$=2.0 is only $0.46\%$ MAE degradation, consistent with our LDP sensitivity analysis in Section~3.6. Additionally, Gaussian noise under R\'{e}nyi-DP accounting \citep{mironov2017renyi} yields MAE worse than both the baseline and full system, confirming that Laplace-based LDP is superior for our bounded $[0,1]^K$ prediction setting. Statistical significance testing, per-model MAE analysis, concept-level specialisation, and API cost analysis are provided in Appendices~\hyperref[app:significance]{C}, \hyperref[app:analysis]{G}, and \hyperref[app:costanalysis]{H} respectively.

\section{Conclusion}

We developed a privacy-preserving multi-LLM federated 
inference framework for cognitive diagnosis. 
LLaMA-3.3-70B, GPT-4o-mini, and Claude-3-Haiku 
provide complementary diagnostic signals that improve 
performance over single-LLM baselines across three 
educational datasets while ensuring $\varepsilon$-local 
differential privacy. Ablation experiments confirm 
that federated aggregation is the primary performance 
driver, residual correction is essential, and LDP 
adds just $0.46\%$ average utility cost. Our results 
demonstrate that federated multi-LLM inference 
outperforms individual models, effective LDP budgets 
preserve accuracy, and API-based federation enables 
secure collaboration without exposing raw student 
data. Future directions include personalized federated 
learning for heterogeneous LLMs and extending the 
framework to cross-domain educational assessment 
beyond mathematics and holistic evaluation.

\section*{Limitations}

The architecture relies on three commercial foundation 
models that may change when providers release updates; 
our residual correction mitigates systematic bias 
shifts but may require periodic recalibration. Our 
privacy guarantee follows an honest-but-curious trust 
paradigm: LDP shields published diagnostic outputs 
from the aggregator and downstream observers, but 
does not prevent commercial API providers from logging 
submitted queries (see Section~3.3 for mitigating 
options). The framework is evaluated on structured 
educational domains; extension to open-ended 
assessment tasks may require adapted prompt 
engineering.


\section*{Acknowledgments}
We thank the anonymous EMNLP 2026 reviewers for their constructive feedback, which significantly improved this work. We gratefully acknowledge support from the National Science Foundation (NSF Awards \#2322109, \#2346609, and \#2333726), the U.S. Department of Defense Office of Naval Research (ONR Award \#N00014-23-1-2396), and Cincinnati Children’s Hospital Medical Center for their financial support of this project.


\bibliography{updated}
\newpage

\appendix

\section{Implementation Details}
\label{app:implementation}

\textbf{A.1 Prompt Templates.} Each LLM entity 
receives an identical structured prompt:

\begin{quote}
\small
\textit{System:} ``You are an educational assessment expert. 
Evaluate student mastery of specific knowledge concepts based 
on their response to a question.''

\textit{User:} ``Question: [QUESTION] $\backslash$n Student 
Response: [RESPONSE] $\backslash$n Knowledge Concepts: 
[CONCEPTS] $\backslash$n For each concept, provide a mastery 
probability between 0.0 and 1.0. Return ONLY a JSON object: 
\{concept\_1: score, ...\}''
\end{quote}

Dataset-specific adaptations: ASSIST09 uses Q-matrix concepts 
(Equations, Percentages, Integers, Conversions); GSM8K uses 
LLM-extracted tags; UCI uses attribute groups. Temperature 
is fixed at 0.7 and max tokens at 512 for all entities.

\textbf{A.2 Hyperparameter Sensitivity ($\alpha$).} 
We set $\alpha = 0.3$ based on validation performance 
across all three datasets using an 80/20 train-validation 
split. Values of $\alpha$ outside the range $[0.2, 0.4]$ consistently yielded higher MAE on the validation set, confirming that $\alpha = 0.3$ provides the optimal residual correction strength. The framework is robust to small perturbations around this value.

\textbf{A.3 Wall-Clock Timing.}  Under parallel API 
calls, processing averages 2--3 seconds per student 
(ASSIST09: 2.4 sec; GSM8K: 2.2 sec; UCI: 1.8 sec) 
in isolated runs. In full dataset runs, effective 
throughput is 5--7 seconds per student due to API 
rate limits and exponential backoff. Complete dataset 
times on Tesla V100S: ASSIST09 8--10 hours, GSM8K 
4--6 hours, UCI 3--4 hours.

\section{Heterogeneous Model Contributions}
\label{app:Modelcontribution}

Different LLMs provide complementary predictions across various educational domains. LLaMA is frequently observed to be more confident(with 0.7-0.9 range) excelling at pattern recognition, GPT-4 provides conservative estimates (0.4-0.6 range) with strong conceptual reasoning, and Claude shows concept-specific diversity \cite{wei2022emergent, bubeck2023sparks}. This heterogeneity enables error cancellation where no single entity consistently matches ground truth \cite{brown2005diversity}. When LLaMA overestimates and GPT-4 underestimates, federated aggregation achieves better accuracy by combining diverse perspectives \cite{ganaie2022ensemble}, explaining why improvements vary by domain: 14.21\% (ASSIST09), 7.39\% (GSM8K), 14.40\% (UCI) based on how complementary strengths align with each dataset's characteristics. Heterogeneous federation enhances diagnostic performance by overcoming single-model limitations through domain-specific complementarity.

\section{Statistical Significance}
\label{app:significance}

Statistical significance was evaluated by paired t-tests. For ASSIST09 with non-private federation (MAE 0.2064 versus baseline 0.2410), we obtain $p < 0.001$ with large effect size (Cohen's $d = 0.89$) \cite{cohen1988statistical}. For $\varepsilon=2.0$ (MAE 0.2068 versus baseline 0.2410), significance remains strong at $p < 0.001$ with Cohen's $d = 0.85$, meaning that a moderate level of differential privacy ensures the preservation of statistical significance \cite{wei2020federated}. Similar statistical significance is observed across GSM8K ($p < 0.01$, Cohen's $d = 0.62$) and UCI ($p < 0.001$, Cohen's $d = 0.78$), thus ensuring significant improvement across all three datasets.

\section{LLM Usage Disclosure}
ChatGPT and Claude were used for coding assistance (implementation support and debugging) and for polishing language and grammar. All technical ideas, experimental design, analyses, and results are original, and all code was reviewed and verified by the authors. Regarding LLM entity versions used in experiments: experiments prior to April 2025 used \texttt{claude-3-haiku-20240307}; subsequent runs used \texttt{claude-sonnet-4-20250514} following deprecation of Claude-3-Haiku. GPT-4o-mini (\texttt{gpt-4o-mini-2024-07-18}) and LLaMA-3.3-70B via Groq were used consistently throughout.

\section{Proposed Algorithms}
\label{app:Algorithms}

Figure~\ref{fig:pipeline} illustrates the complete 
privacy-preserving heterogeneous multi-LLM cognitive 
diagnosis framework.

\begin{figure*}[t]
    \centering
    \includegraphics[width=\textwidth]{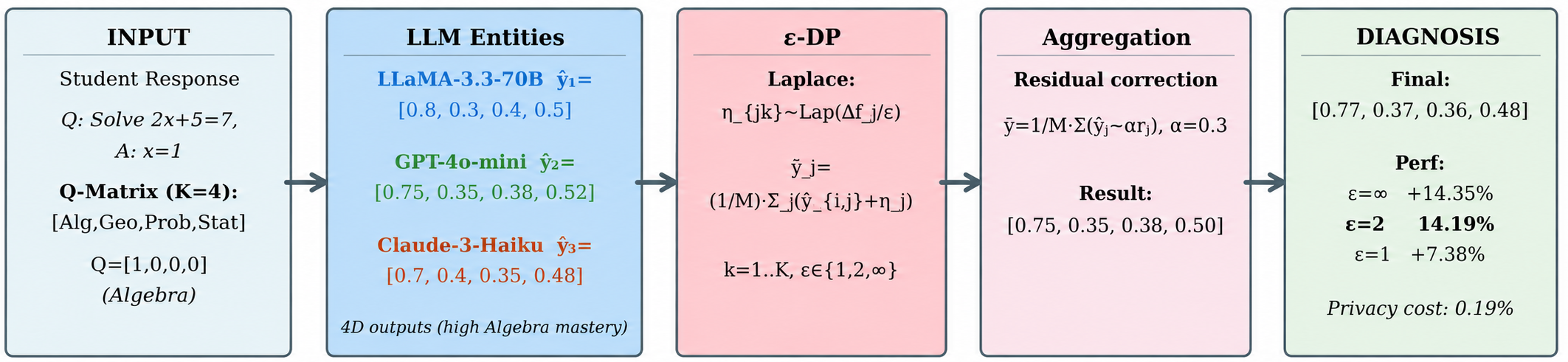}
    \caption{Overview of the privacy-preserving heterogeneous multi-LLM cognitive diagnosis framework. Student responses and the associated Q-matrix are independently processed by three LLMs (LLaMA-3.3-70B, GPT-4o-mini, Claude-3-Haiku), producing 4D knowledge state vectors. Each entity applies $(\varepsilon,0)$-differential privacy via Laplace noise at the output level before aggregation, preventing the centralized aggregator from inferring individual student responses. Residual-corrected aggregation combines heterogeneous predictions into a global diagnosis, leveraging complementary model behaviors while maintaining strong privacy guarantees.}
    \label{fig:pipeline}
\end{figure*}

Algorithm~\ref{alg:federated_ldp} presents our complete 
federated cognitive diagnosis framework with local 
differential privacy. Algorithm~\ref{alg:baseline}
presents the baseline without privacy mechanisms.

\begin{algorithm}[t]
\caption{Privacy-Preserving Heterogeneous Multi-LLM Federated Cognitive Diagnosis with Local Differential Privacy}
\label{alg:federated_ldp} 
\begin{algorithmic}[1]
\REQUIRE Student responses $\mathcal{D} = \bigcup_{j=1}^{M} \mathcal{D}_j$, LLM entities $\{\mathcal{M}_j\}_{j=1}^M$, privacy budget $\varepsilon$, concepts $K$
\ENSURE Global diagnosis $Y_{\text{global}}$
\STATE Initialization:
\STATE Load Q-matrix $\mathbf{Q} \in \{0,1\}^{|\mathcal{Q}| \times K}$
\STATE Initialize residual corrections $r_j \leftarrow 0$ for $j=1,\ldots,M$
\FOR{each student response $(i) \in \mathcal{D}$}
  \STATE // Local Assessment Phase with Privacy
  \FOR{each LLM entity $\mathcal{M}_j$}
    \STATE $\hat{y}_{i,j} \leftarrow \mathcal{M}_j(\text{prompt}(s_{i,j}, q_{i,j}, \text{concepts}))$
    \STATE // Local DP: Add noise before sharing
    \STATE Compute local sensitivity: $\Delta f_j \leftarrow K$
    \STATE Sample noise vector: $\boldsymbol{\eta}_j = [\eta_{j1}, \ldots, \eta_{jK}]$ where $\eta_{jk} \sim \text{Lap}(\Delta f_j/\varepsilon)$
    \STATE Apply noise: $\hat{y}_{i,j}^{\text{noisy}} \leftarrow \hat{y}_{i,j} + \boldsymbol{\eta}_j$
    \STATE Clip to valid range: $\hat{y}_{i,j}^{\text{noisy}} \leftarrow \text{clip}(\hat{y}_{i,j}^{\text{noisy}}, 0, 1)$
  \ENDFOR
  \STATE // Aggregation with Residual Correction
  \STATE $\tilde{y}_i \leftarrow \frac{1}{M}\sum_{j=1}^M (\hat{y}_{i,j}^{\text{noisy}} - \alpha \cdot r_j)$
  \STATE Store: $Y_{\text{global}}[i] \leftarrow \tilde{y}_i$
\ENDFOR
\STATE // Update Residual Corrections
\FOR{$j=1$ to $M$}
  \STATE $r_j \leftarrow \mathbb{E}[\hat{y}_j - \bar{y}]$ over validation set
\ENDFOR
\RETURN $Y_{\text{global}}$
\end{algorithmic}
\end{algorithm}

\begin{algorithm}[t]
\caption{Baseline Multi-LLM Cognitive Diagnosis (No Privacy)}
\label{alg:baseline}  
\begin{algorithmic}[1]
\REQUIRE Student responses $\mathcal{D}$, LLM entities $\{M_j\}_{j=1}^M$, concepts $K$
\ENSURE Global diagnosis $Y_{\text{global}}$
\FOR{each student response $($i$) \in \mathcal{D}$}
  \FOR{each LLM entity $M_j$}
    \STATE $\hat{y}_{i,j} \leftarrow M_j(\text{prompt}(s_{i,j}, q_{i,j}, \text{concepts}))$
  \ENDFOR
  \STATE $\tilde{y}_i \leftarrow \frac{1}{M}\sum_{j=1}^M \hat{y}_{i,j}$ // Simple averaging, no noise
  \STATE $Y_{\text{global}}[i] \leftarrow \tilde{y}_i$
\ENDFOR
\RETURN $Y_{\text{global}}$
\end{algorithmic}
\end{algorithm}

\section{Formal Proof of Theorem~1}
\label{app:proof}

\noindent\textbf{Theorem~1 (Restated).} Algorithm~1 
with Laplace noise satisfies $\varepsilon$-local 
differential privacy for any adjacent datasets 
$D, D'$ differing in one student record.

\begin{proof}
Let $\mathcal{M}_j$ denote entity $j$'s local 
mechanism. For any two adjacent datasets $D_j, D_j'$ 
differing in one student record, and for any output 
set $S \subseteq \mathbb{R}^K$, we must show:

\begin{equation}
\Pr[\mathcal{M}_j(D_j) \in S] \leq 
e^{\varepsilon} \cdot \Pr[\mathcal{M}_j(D_j') \in S] 
\end{equation}

\noindent\textbf{Step 1: Bounding Local Sensitivity.}
Each entity $\mathcal{M}_j$ computes a prediction 
$\hat{y}_{i,j} \in [0,1]^K$ for student $i$. Since 
predictions are bounded in $[0,1]^K$, the $\ell_1$ 
sensitivity of the prediction function is:

\begin{equation}
\Delta f_j = \max_{D_j, D_j'} 
\|\hat{y}_{i,j}(D_j) - \hat{y}_{i,j}(D_j')\|_1 
\leq K
\end{equation}

\noindent since each of the $K$ components changes by at most 
1. We conservatively set $\Delta f_j = K$.

\noindent\textbf{Step 2: Laplace Mechanism Privacy.}
Entity $j$ adds independent Laplace noise 
$\eta_{jk} \sim \text{Lap}(\Delta f_j / \varepsilon)$ 
to each concept $k = 1,\ldots,K$. By the standard 
Laplace mechanism \citep{dwork2006calibrating}, for 
any single concept $k$:

Since $\eta_{jk} \sim \text{Lap}(\Delta f_j/\varepsilon)$, 
substituting the Laplace PDF and cancelling the 
normalizing constant $\frac{\varepsilon}{2\Delta f_j}$:

\begin{align}
&\frac{\Pr[\hat{y}_{i,j,k} + \eta_{jk} = z]}
{\Pr[\hat{y}_{i,j,k}' + \eta_{jk} = z]} \nonumber\\
&= \exp\!\left(\frac{\varepsilon(|z - \hat{y}'_{i,j,k}| 
- |z - \hat{y}_{i,j,k}|)}{\Delta f_j}\right) \nonumber\\
&\leq \exp\!\left(\frac{\varepsilon \cdot 
|\hat{y}_{i,j,k} - \hat{y}'_{i,j,k}|}
{\Delta f_j}\right) \leq e^{\varepsilon}
\end{align}

\noindent where the last inequality follows from 
$|\hat{y}_{i,j,k} - \hat{y}_{i,j,k}'| \leq \Delta f_j$. \\

\noindent\textbf{Step 3: Composition Across Concepts.}
Since noise is applied independently per concept 
$k = 1,\ldots,K$, and we set $\Delta f_j = K$ 
(the total $\ell_1$ sensitivity), the joint 
mechanism satisfies $\varepsilon$-DP by the 
standard Laplace mechanism for vector-valued 
functions \citep{dwork2006calibrating}:

\begin{equation}
\frac{\Pr[\mathcal{M}_j(D_j) \in S]}
{\Pr[\mathcal{M}_j(D_j') \in S]} \leq 
e^{\varepsilon \cdot 
\frac{\|\hat{y}_{i,j} - \hat{y}_{i,j}'\|_1}
{\Delta f_j}} \leq e^{\varepsilon}
\end{equation}

\noindent\textbf{Step 4: Local DP Guarantee.}
Since each entity $\mathcal{M}_j$ adds noise 
locally before sharing with the aggregator, and 
the above holds for all $j = 1,\ldots,M$, each 
entity's output satisfies $\varepsilon$-local 
differential privacy. The aggregator receives 
only noisy predictions $\tilde{y}_{i,j} = 
\hat{y}_{i,j} + \eta_j$, and cannot infer 
individual student records from these outputs.

\noindent\textbf{Step 5: Post-processing Invariance.}
Residual correction and averaging 
(Algorithm~1, line~15) are deterministic 
post-processing operations applied to the 
already-privatized outputs. By the 
post-processing theorem \citep{dwork2014algorithmic}, 
these operations cannot reduce the privacy 
guarantee. Therefore the final global diagnosis 
$Y_{\text{global}}$ satisfies 
$\varepsilon$-local differential privacy.
\renewcommand{\qedsymbol}{} 
\end{proof}

\noindent Note that $[0,1]$ clipping (Algorithm~1, line~12) 
is post-processing and preserves $\varepsilon$-LDP 
\citep{dwork2014algorithmic}, and the effective 
per-student sensitivity $K/M \leq K/3$ provides 
tighter practical privacy accounting than the 
conservative bound $K$.

\begin{table}[t]
\centering
\scriptsize
\setlength{\tabcolsep}{3pt}
\begin{tabular}{lcccc}
\toprule
\textbf{Model} & \textbf{ASSIST09} & \textbf{GSM8K} & \textbf{UCI} & \textbf{Avg.} \\
\midrule
LLaMA-3.3-70B  & 0.2410 & 0.2328 & 0.1132 & 0.1956\\
GPT-4o-mini    & 0.2287 & 0.2295 & 0.2148 & 0.2243 \\
Claude-3-Haiku & 0.2793 & 0.2686 & 0.2159 & 0.2546 \\
\midrule
Our Federation
    & \textbf{0.2068} & \textbf{0.2156} & \textbf{0.0969} & \textbf{0.1731} \\
\bottomrule
\end{tabular}
\caption{Per-model MAE vs.\ full federated system ($\varepsilon$=2.0) 
on ASSIST09, GSM8K and UCI. Federation outperforms all individual models 
on three datasets.}
\label{tab:permodel}
\end{table}

\section{Quantitative Complementarity Analysis}
\label{app:analysis}

Performance comparisons across all three datasets 
and configurations are visualized in 
Figure~\ref{fig:privacy_utility}. Table~\ref{tab:permodel} presents individual model MAE on ASSIST09 
and UCI, which provide real student response ground truth. GSM8K 
uses programmatically derived concept proxies and is excluded from 
individual model comparison; concept-level results for GSM8K are 
discussed qualitatively below.

\noindent

\begin{figure*}[t]
\centering
\includegraphics[width=\textwidth]{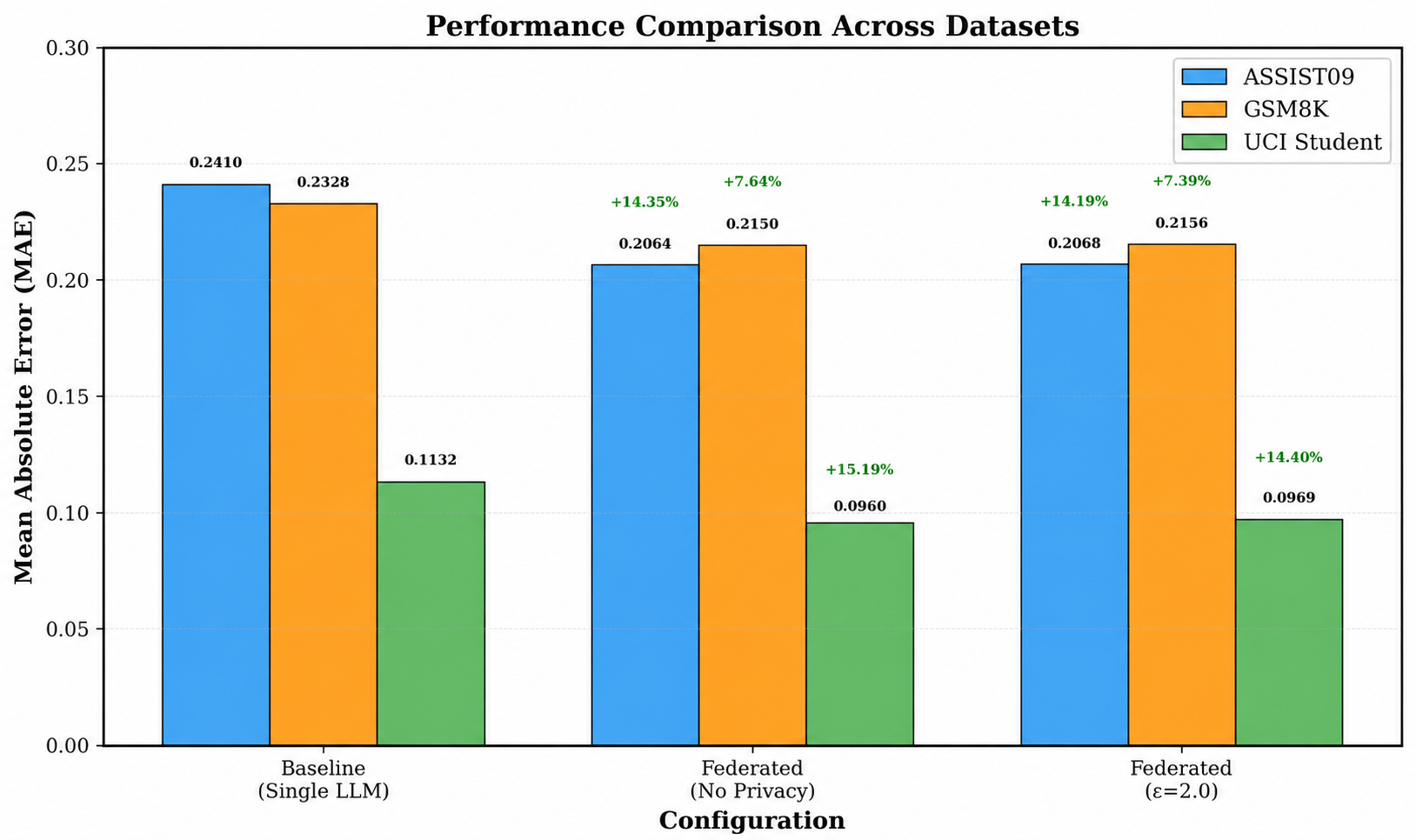}
\caption{Performance comparison across three 
educational datasets. Our heterogeneous multi-LLM 
federated framework with $\varepsilon=2.0$ consistently 
outperforms the single-LLM baseline: ASSIST09 achieves 
$14.19\%$ improvement (MAE: 0.2068 vs.\ 0.2410), GSM8K 
achieves $7.39\%$ improvement (MAE: 0.2156 vs.\ 0.2328), 
and UCI achieves $14.40\%$ improvement (MAE: 0.0969 
vs.\ 0.1132). Privacy-free federation achieves $14.35\%$, 
$7.64\%$, and $15.19\%$ on ASSIST09, GSM8K, and UCI 
respectively, with $\varepsilon=2.0$ incurring only 
$0.46\%$ average privacy cost. Full multi-baseline comparison across all configurations is provided in Table~\ref{tab:ablation}.}
\label{fig:privacy_utility}
\end{figure*}


Our federated system outperforms all individual models on all three datasets, with MAE of 0.2068, 0.2156, and 0.0969 on ASSIST09, GSM8K, and UCI, respectively. The top individual performer differs depending on the dataset: Claude leads on UCI (0.2159), while GPT-4o-mini leads on ASSIST09 (0.2287) and UCI (0.2148). Our federation consistently beats even the top individual model on every dataset, despite this volatility, demonstrating that performance increases cannot be attributed to any single model's strength alone.

\textbf{Concept-Level Specialisation.} Experiments show that different knowledge concepts have different model strengths. On ASSIST09, LLaMA and Claude both perform best on Equations (MAE=0.2534 and 0.1950, respectively), indicating complementing strengths across algebraic sub-skills, whereas GPT-4o-mini obtains the lowest MAE on Integers (MAE=0.1989). Each model contributes a distinct signal across reasoning dimensions, with LLaMA outperforming GPT-4o-mini at Answer\_Verification, Claude at Multi\_Step\_Reasoning, and GPT-4o-mini at Arithmetic on GSM8K. Claude-3-Haiku demonstrates concept-level variation even on holistic evaluation tasks, with the lowest MAE on Family\_Support on UCI (MAE=0.1022 vs.\ LLaMA's 0.1411). This concept-level specialisation provides quantitative evidence that complementarity is not merely a hypothesis but an empirically observed phenomenon driving federation gains.

Table~\ref{tab:concept_level} presents the complete 
concept-level MAE breakdown, quantifying each model's 
domain-specific strengths across mathematical reasoning, 
conceptual understanding, and nuanced evaluation.
\begin{table}[t]
\centering
\scriptsize
\setlength{\tabcolsep}{3pt}
\begin{tabular}{llccc}
\toprule
\textbf{Dataset} & \textbf{Concept} & \textbf{LLaMA} & 
\textbf{GPT-4} & \textbf{Claude} \\
\midrule
ASSIST09 & Equations        & \textbf{0.2534} & 0.2170 & \textbf{0.1950} \\
         & Percentages      & 0.3053 & 0.2305 & 0.3027 \\
         & Integers         & 0.3820 & \textbf{0.1989} & 0.2590 \\
         & Conversions      & 0.5157 & 0.2686 & 0.3607 \\
\midrule
GSM8K    & Problem\_Setup   & 0.2630 & 0.3440 & 0.2840 \\
         & Arithmetic       & 0.2850 & \textbf{0.2963} & 0.2780 \\
         & Multi\_Step      & 0.2851 & 0.3273 & \textbf{0.2294} \\
         & Answer\_Verif.   & \textbf{0.2140} & 0.4259 & 0.2832 \\
\midrule
UCI      & Study\_Habits    & 0.2557 & 0.2384 & 0.2734 \\
         & Family\_Support  & 0.1411 & 0.1316 & \textbf{0.1022} \\
         & School\_Engmt.   & 0.3497 & 0.3050 & 0.3156 \\
         & Social\_Factors  & 0.2398 & 0.2132 & 0.2119 \\
         & Academic\_Found. & 0.2368 & 0.1858 & 0.1763 \\
\bottomrule
\end{tabular}
\caption{Concept-level MAE per model across all datasets. 
\textbf{Bold} = best per concept. LLaMA excels at 
Answer\_Verification (0.2140); GPT-4 at Integers 
(0.1989) and Arithmetic (0.2963); Claude at Equations 
(0.1950), Multi\_Step\_Reasoning (0.2294), and 
Family\_Support (0.1022), confirming domain-specific 
complementarity across model architectures.}
\label{tab:concept_level}
\end{table}

\textbf{Heterogeneous Model Diversity.} The three models show 
distinct per-model MAE profiles across datasets 
(Table~\ref{tab:permodel}), confirming that no single model 
dominates across all evaluation settings. This systematic diversity 
is precisely what our residual correction mechanism is designed to 
handle, and Table~\ref{tab:ablation} confirms that removing residual correction 
degrades performance below the single-LLM baseline, validating the 
necessity of correcting heterogeneous calibration biases.

\section{API Cost Analysis}
\label{app:costanalysis}

Table~\ref{tab:cost} compares our framework against traditional 
cognitive diagnosis alternatives. At current API pricing 
(GPT-4o-mini: \$0.15/1M input tokens; Claude-3-Haiku: \$0.25/1M 
input tokens; Groq LLaMA-3.3-70B: free unlimited tier), our 
per-student cost is approximately \$0.002--\$0.005.

\textbf{Per-Dataset Cost.} ASSIST09 (4,217 students): \$10--12 per 
full run; GSM8K (1,319 problems): \$2--6 per run; UCI (649 students): 
\$1.50--2 per run. Total cost per full three-dataset evaluation is 
under \$20, compared to \$63,255--\$210,850 for equivalent human 
expert annotation.


\textbf{Cost-Reduction Strategies.} Three strategies reduce costs further: (1) Selective federation: invoke all three models only for borderline cases, reducing cost by 60--70\%; (2) Response caching: cache responses for repeated questions across cohorts; (3) Local substitution: replace commercial APIs with locally hosted models (e.g., LLaMA via Ollama, Section~3.3), reducing total cost to \$0 with full privacy.

\begin{table}[H]
\centering
\scriptsize
\setlength{\tabcolsep}{3pt}
\begin{tabular}{lccc}
\toprule
\textbf{Method} & \textbf{Per Student} & \textbf{Per 1K} & 
\textbf{Scale} \\
\midrule
Human Expert Annotation  & \$15--50       & \$15K--50K & Low \\
Crowdsourcing (MTurk)    & \$0.50--2.00   & \$500--2K  & Medium \\
Single LLM (GPT-4o-mini) & \$0.001--0.002 & \$1--2     & High \\
\textbf{Our Framework}   & \$0.002--0.005 & \$2--5     & High \\
LLaMA only (Groq free)   & \$0.000        & \$0        & Highest \\
\bottomrule
\end{tabular}
\caption{API cost comparison. Our framework costs 
\$0.002--\$0.005 per student, substantially lower 
than human expert annotation (\$15--50) 
\citep{leighton2007cognitive}.}
\label{tab:cost}
\end{table}

\end{document}